\newcommand{\der}{\,\mathrm{d}}
\newcommand\bigforall{\mbox{\Large $\mathsurround=1pt\forall$}}
\def\<{\langle}
\def\>{\rangle}
\newcommand{\Tr}{\mathrm{Tr}}
\def\oper{{\mathchoice{\rm 1\mskip-4mu l}{\rm 1\mskip-4mu l}
{\rm 1\mskip-4.5mu l}{\rm 1\mskip-5mu l}}}
\DeclareMathAlphabet\mathbfcal{OMS}{cmsy}{b}{n}
\mathchardef\mhyphen="2D 
\newtheorem{Theorem}{Theorem}
\newtheorem{Lemma}{Lemma}
\newtheorem{Corollary}{Corollary}
\newtheorem{Proposition}{Proposition}
\newtheorem{Example}{Example}
\begin{document}

\title{Non-Markovianity criteria for mixtures of noninvertible Pauli dynamical maps}

\author{Katarzyna Siudzi\'{n}ska}
\affiliation{Institute of Physics, Faculty of Physics, Astronomy and Informatics \\  Nicolaus Copernicus University, ul. Grudzi\k{a}dzka 5/7, 87--100 Toru\'{n}, Poland}

\begin{abstract}
We analyze the connections between the non-Markovianity degree of the most general phase-damping qubit maps and their legitimate mixtures. Using the results for image non-increasing dynamical maps, we formulate the necessary and sufficient conditions for the Pauli maps to satisfy specific divisibility criteria. Next, we examine how the non-Markovianity properties for (in general noninvertible) Pauli dynamical maps influence the properties of their convex combinations. Our results are illustrated with instructive examples. For P-divisible maps, we propose a legitimate time-local generator whose all decoherence rates are temporarily infinite.
\end{abstract}

\flushbottom

\maketitle

\thispagestyle{empty}

\section{Introduction}

Quantum evolution with memory effects is a rapidly developing research area \cite{BLPV,Alonso} due to its applications in quantum information processing and
quantum communication \cite{BreuerPetr,RivasHuelga,Nielsen}.
Modern experimental methods make it possible to observe new non-Markovian effects caused by the interactions in the environment \cite{Bernardes,JinGiovannetti,WuHou,Walborn}.
One can even quantify \cite{Balthazar} and control \cite{LiuHuang,YuWang} the degree of non-Markovianity through environmental states manipulation.
Hence, it is important to further develop the theory of open quantum system dynamics that goes beyond the Markovian regime with emphasis on characterization and quantification \cite{RHP2,Wiseman}. Two main approaches are based on the divisibility \cite{RHP} and information backflow \cite{BLP}.

The evolution of open quantum systems is provided by dynamical maps, which are time-parameterized families of quantum channels (completely positive, trace-preserving maps) $\{\Lambda(t):\,t\geq 0,\,\Lambda(0)=\oper\}$ acting on the space $\mathcal{B}(\mathcal{H})$ of bounded operators on the Hilbert space $\mathcal{H}$. For an arbitrary initial state $\rho$, the evolved state is given by $\rho(t)=\Lambda(t)[\rho]$. It is evident that all the information about quantum dynamics, including its Markovianity, is encoded in the properties of $\Lambda(t)$. In one of the most popular approaches, the Markovianity degree of the evolution is related to the divisibility of the associated dynamical map \cite{RHP,Wolf}. Recall that $\Lambda(t)$ is divisible if it can be decomposed into
\begin{equation}\label{div}
\Lambda(t)=V(t,s)\Lambda(s)
\end{equation}
at all times $t\geq s\geq 0$, where $V(t,s)$ is a trace-preserving propagator. Now, if $V(t,s)$ is completely positive, then the corresponding $\Lambda(t)$ is CP-divisible, and hence the evolution it provides is called {\it Markovian} \cite{RHP,Wolf}. If the complete positivity of the propagator is broken, the evolution becomes non-Markovian. There exists a hierarchy of $k$-positive maps $V(t,s)$ that correspond to $k$-divisible dynamical maps \cite{Sabrina}. For $k=d$, where $d$ is the dimension of the underlying Hilbert space, one recovers CP-divisible $\Lambda(t)$. On the other hand, $k=1$ reproduces P-divisible $\Lambda(t)$, for which $V(t,s)$ is a positive map. The property of P-divisibility relates to the classical definition of Markovianity for one-point
probabilities in stochastic processes \cite{Piilo}. Further distinctions exist, where the positive but not completely positive (PnCP) $V(t,s)$ is associated with PnCP-divisible dynamical maps. Such evolution is referred to as {\it weakly non-Markovian} \cite{Sabrina}. Every P-divisible evolution satisfies the Breuer-Laine-Piilo condition for no information backflow from the system to the environment \cite{BLP}. Moreover, an {\it essentially non-Markovian} evolution arises from the dynamical maps that are not even P-divisible \cite{Filip2}.

If $\Lambda(t)$ is invertible, then the propagator of quantum evolution is well defined, as eq. (\ref{div}) gives $V(t,s)=\Lambda(t)\Lambda^{-1}(s)$. Moreover, in this case, the CP-divisibility of the dynamical map is closely related to the properties of the time-local generator $\mathcal{L}(t)$ that enters the time-local master equation
\begin{equation}
\frac{\der}{\der t}\Lambda(t)=\mathcal{L}(t)\Lambda(t).
\end{equation}
Namely, $\Lambda(t)$ is CP-divisible if and only if its generator has the standard time-dependent Gorini-Kossakowski-Sudarshan-Lindblad (GKSL) form \cite{GKS,L}
\begin{equation}
\begin{split}
\mathcal{L}(t)[\rho]=&-i[H(t),\rho]\\&+\sum_\alpha\gamma_\alpha(t)\Bigg(V_\alpha(t)\rho V_\alpha^\dagger(t)-\frac 12\{V_\alpha^\dagger(t)V_\alpha(t),\rho\}\Bigg)
\end{split}
\end{equation}
with the decoherence rates $\gamma_\alpha(t)\geq 0$. The choice of the Hamiltonian $H(t)$ as well as the noise operators $V_\alpha$, which are responsible for decoherence and dissipation phenomena, are not constrained in any means.

However, if a dynamical map is noninvertible, then $\gamma_\alpha(t)\geq 0$ is no longer necessary for CP-divisibility \cite{div_inf_flow}. In addition, the propagator itself can no longer be defined by the inverse of $\Lambda(t)$. Instead, one introduces a generalized inverse $\Lambda^-(t)$ of $\Lambda(t)$, so that $V(t,s)=\Lambda(t)\Lambda^-(s)$ \cite{Ujan}. In general, $\Lambda^-(t)$ is not uniquely defined, but for invertible maps it reduces to the standard inverse. Construction the P and CP-divisibility conditions for noninvertible dynamical maps is a relatively new research topic that requires further study \cite{div_inf_flow,Sagnik,Ujan}. It is motivated by the fact that many interesting physical evolutions are described by noninvertible dynamical maps, including a two-level atom driven by a phase noisy laser \cite{Cresser}, a damped two-level atom interacting with a single-mode field in the Jaynes-Cummings model \cite{Garraway,Wonderen,Andersson}, an amplitude damping model on resonance with Lorentzian reservoir spectrum \cite{Acin}, and a quantum NOT gate implementation \cite{NOT}. 
An important class of quantum evolution, where the system undergoes decoherence without dissipation, is provided by unital quantum maps \cite{Omkar}. The most general form of unital qubit maps is provided by the Pauli channels \cite{Landau,King}. Non-invertible Pauli dynamical maps follow for a two-level system in dissipative environment \cite{BLPV}, a two-level atom driven by a phase noisy laser \cite{Cresser}, and a two-level quantum system interacting with an environment possessing random telegraph signal noise \cite{Daffer}.

This paper is the continuation of Ref. \cite{CCMK}, where we analyzed convex combinations of non-invertible dynamical maps. For the Pauli maps, we showed that mixing non-invertible maps results in a shift, addition, or even removal of singular points. In particular, the conditions for obtaining the Markovian semigroup were presented. The main goal of this paper is to check how the divisibility properties of the maps influence the Markovianity of their mixtures. The scope is limited to bistochastic qubit evolution due to their simple spectral properties. Such analysis has so far been performed only for invertible Pauli maps, including convex combinations of Markovian semigroups \cite{ENM,Nina,Jagadish2} and CP-divisible dynamical maps \cite{CCMS,Jagadish3}. Experimental investigation of mixing two Pauli channels was performed in a photonic setup \cite{Uriri}.

In the following sections, we recall the general properties of the Pauli channels and define the Pauli dynamical maps $\Lambda(t)$ that describe the evolution of open quantum systems. Next, we use the results for image non-increasing dynamical maps to formulate the divisibility conditions for $\Lambda(t)$ that are in general noninvertible. Finally, we analyze the Markovianity properties for mixtures of non-invertible Pauli dynamical maps. We illustrate our results with several examples, showing that there are certain relations between the divisibility of maps and their convex combinations. In particular, we propose a legitimate P-divisible dynamical map generated via $\mathcal{L}(t)$ whose all decoherence rates are temporarily infinite. In Conclusions, we provide a summary of our results and a list of open questions.

\section{Pauli channels}

Consider the mixed unitary evolution of a qubit given by the Pauli channel \cite{TQI}
\begin{equation}\label{Pauli}
\Lambda[\rho]=\sum_{\alpha=0}^3 p_\alpha \sigma_\alpha \rho \sigma_\alpha,
\end{equation}
where $p_\alpha$ denotes the probability distribution and $\sigma_\alpha$ are the Pauli matrices. Mixed unitary channels arise from unitary evolution disrupted by classical errors, and they are also referred to as {\it random unitary evolution} \cite{Scheel} or {\it evolution under random external fields} \cite{Alicki}.
Notably, $\Lambda$ describes the most general bistochastic evolution of a qubit \cite{King,Landau}. Alternatively, the Pauli channels are defined via their eigenvalue equations,
\begin{equation}
\Lambda[\sigma_\alpha]=\lambda_\alpha\sigma_\alpha,\qquad \lambda_0=1,
\end{equation}
where the (real) eigenvalues $\lambda_\alpha$ relate to the probability distribution $p_\alpha$ as follows,
\begin{equation}
\lambda_\alpha = p_0 + 2p_\alpha - \sum_{\beta=1}^3 p_\beta,\qquad\alpha=1,2,3.
\end{equation}
The inverse relation reads
\begin{equation}
\begin{split}
p_0 &= \frac 14(1 + \lambda_1 + \lambda_2 + \lambda_3),\\
p_\alpha &= \frac 14 \left( 1 + 2\lambda_\alpha - \sum_{\beta=1}^3 \lambda_\beta \right),\quad \alpha=1,2,3.
\end{split}
\end{equation}
The complete positivity conditions for $\Lambda$ are given by the Fujiwara-Algoet conditions \cite{Fujiwara,King}
\begin{equation}\label{Fuji-2}
-1 \leq \sum_{\beta=1}^{3} \lambda_\beta\leq 1+2\min_{\beta}\lambda_\beta.
\end{equation}

The evolution of open quantum systems is represented by dynamical maps $\Lambda(t)$. In case of the Pauli dynamical maps, the time-dependence manifests itself in their eigenvalues $\lambda_\alpha(t)$, whereas the eigenvectors $\sigma_\alpha$ remain independent of time.
An interesting class includes mixtures of legitimate qubit dynamics. One usually considers convex combinations of phase-damping channels
\begin{equation}\label{deph}
\Lambda_\alpha(t)[\rho]=(1-p(t))\rho+p(t)\sigma_\alpha\rho\sigma_\alpha
\end{equation}
with the eigenvalue equations
\begin{equation}
\Lambda_\alpha(t)[\sigma_\alpha]=\sigma_\alpha,\qquad \Lambda_\alpha(t)[\sigma_\beta]=\lambda(t)\sigma_\beta,\quad\beta\neq\alpha,
\end{equation}
where $\lambda(t)=1-2p(t)\in[-1,1]$. The resulting map reads
\begin{equation}\label{mix}
\Lambda(t)[\rho]=\sum_{\alpha=1}^{3}x_\alpha\Lambda_\alpha(t)[\rho]=(1-p(t))\rho+p(t)
\sum_{\alpha=1}^{3} x_\alpha\sigma_\alpha\rho\sigma_\alpha,
\end{equation}
and its eigenvalues $\lambda_\alpha(t)$ are related to $\lambda(t)$ via
\begin{equation}\label{lat}
\lambda_\alpha(t)=x_\alpha+(1-x_\alpha)\lambda(t).
\end{equation}
The choice of $\lambda(t)=\exp(-rt)$ with positive $r$ corresponds to the mixtures of Markovian semigroups \cite{mub_final,ICQC}
\begin{equation}
\Lambda(t)=\sum_{\alpha=1}^{3}x_\alpha e^{rt\mathcal{L}_\alpha},
\end{equation}
where the semigroup generator
\begin{equation}\label{gen}
\mathcal{L}_\alpha[\rho]=\frac 12\left(\sigma_\alpha\rho\sigma_\alpha -\rho\right).
\end{equation}
This class includes the celebrated eternally non-Markovian evolution \cite{ENM,Nina}. A more general case has been analyzed in refs. \cite{CCMS,Jagadish3}, where the authors considered convex combinations of CP-divisible dynamical maps
\begin{equation}
\Lambda(t)=\sum_{\alpha=1}^{3}x_\alpha e^{r(t)\mathcal{L}_\alpha}.
\end{equation}
Note that these maps follow from eq. (\ref{mix}) for $\lambda(t)=\exp[-r(t)]$ with $r(t)\geq 0$. Finally, in ref. \cite{CCMK}, even more general mixtures have been analyzed, where the phase-damping channels are non-invertible (that is, $\lambda(t)=0$ for some $t>0$).

\section{Divisibility vs. indivisibility of dynamical maps}

A dynamical map $\Lambda(t)$ is divisible if and only if it can be decomposed into a trace-preserving map $V(t,s)$ and itself at an earlier time, so that $\Lambda(t)=V(t,s)\Lambda(s)$. Actually, for $\Lambda(t)$ that are invertible at all times $t\geq 0$, one always finds the propagator $V(t,s)=\Lambda(t)\Lambda^{-1}(s)$. Therefore, such maps are always divisible. Moreover, the P and CP-divisibility of invertible dynamical maps has a full mathematical characterization.

\begin{Theorem}[\cite{CKR}]\label{inv}
An invertible dynamical map $\Lambda(t)$ is P-divisible if and only if
\begin{equation}
\frac{\der}{\der t}\|\Lambda(t)[X]\|_1\leq 0
\end{equation}
and CP-divisible if and only if
\begin{equation}
\frac{\der}{\der t}\|\oper\otimes\Lambda(t)[Y]\|_1\leq 0
\end{equation}
for all Hermitian $X\in\mathcal{B}(\mathcal{H})$ and $Y\in\mathcal{B}(\mathcal{H}\otimes\mathcal{H})$, respectively.
\end{Theorem}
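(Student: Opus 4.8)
The plan is to prove the theorem via a characterization of positive and completely positive maps through their action on the trace norm. The key observation is that a linear trace-preserving map $\Phi$ is positive if and only if it is contractive in the trace norm on the set of Hermitian operators, i.e. $\|\Phi[X]\|_1\leq\|X\|_1$ for all Hermitian $X$; likewise $\Phi$ is completely positive if and only if $\oper\otimes\Phi$ is trace-norm contractive on Hermitian operators in $\mathcal{B}(\mathcal{H}\otimes\mathcal{H})$. This is the classical Kossakowski-type criterion, and I would state it as the backbone of the argument. Applied to the propagator $V(t,s)$, positivity of $V(t,s)$ for all $t\geq s$ is equivalent to $\|V(t,s)[X]\|_1\leq\|X\|_1$, and analogously for complete positivity with the ancilla-extended map.

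First I would fix $s$ and write, using invertibility, $V(t,s)=\Lambda(t)\Lambda^{-1}(s)$, so that for any Hermitian $X$ we may set $Y=\Lambda(s)[X]$ (which ranges over all Hermitian operators as $X$ does, since $\Lambda(s)$ is invertible and preserves Hermiticity). Then $\|V(t,s)[Y]\|_1=\|\Lambda(t)[X]\|_1$, and P-divisibility demands $\|\Lambda(t)[X]\|_1\leq\|\Lambda(s)[X]\|_1$ for all $t\geq s\geq 0$. This monotonicity of $t\mapsto\|\Lambda(t)[X]\|_1$ is exactly the statement that its derivative is nonpositive. I would argue both directions: monotone non-increasing for every fixed $X$ implies the contraction inequality for $V(t,s)$ by integrating, while the contraction inequality implies the derivative bound by taking the limit $t\to s^+$.

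For the CP case I would repeat the argument one level up, replacing $\Lambda(t)$ by $\oper\otimes\Lambda(t)$ and using that $\oper\otimes V(t,s)=(\oper\otimes\Lambda(t))(\oper\otimes\Lambda^{-1}(s))$, with $Y$ now ranging over Hermitian operators on $\mathcal{H}\otimes\mathcal{H}$. The substitution $Y\mapsto(\oper\otimes\Lambda(s))[Y]$ again reaches all Hermitian operators because $\oper\otimes\Lambda(s)$ is invertible whenever $\Lambda(s)$ is. The contraction criterion for complete positivity then yields the stated derivative condition.

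The main obstacle, and the step deserving the most care, is the rigorous justification that pointwise monotonicity of the trace norm is equivalent to the differential inequality. In one direction this requires that $t\mapsto\|\Lambda(t)[X]\|_1$ be differentiable (or at least that one work with one-sided Dini derivatives) and that non-positivity of the derivative for all $t$ be equivalent to the map being non-increasing; here the subtlety is that $\|\cdot\|_1$ is not everywhere differentiable, so I would phrase the derivative condition in terms of the upper right Dini derivative or invoke the fact that a continuous function with non-positive derivative almost everywhere (where it exists) and the right monotonicity structure is non-increasing. I expect the cleanest route is to note that the equivalence between the contraction $\|V(t,s)[X]\|_1\le\|X\|_1$ and the infinitesimal inequality follows from $V(t,s)=\oper+(t-s)\mathcal{L}(s)+o(t-s)$ near the diagonal, so that taking $t\to s^+$ converts the finite-time contraction into the derivative bound, while integrating the derivative bound recovers the finite-time contraction by the semigroup-like composition $V(t,s)=V(t,u)V(u,s)$.
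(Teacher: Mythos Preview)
The paper does not supply its own proof of this theorem: it is quoted with the citation \cite{CKR} and used as an external input, so there is no in-paper argument to compare against. Your proposal is the standard route to this result and is essentially correct: the Kossakowski--Ruskai characterization of positive trace-preserving maps as trace-norm contractions on Hermitian operators, combined with the substitution $X\mapsto\Lambda(s)[X]$ (a bijection on Hermitian operators by invertibility), reduces positivity of $V(t,s)$ to monotonicity of $t\mapsto\|\Lambda(t)[X]\|_1$, and the CP case follows by tensoring with the identity. Your caveat about differentiability of the trace norm is the right place to be careful; phrasing the condition via the upper Dini derivative, or assuming the mild regularity that is implicit in the original reference, closes that gap.
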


The problem becomes more complicated when one relaxes the invertibility condition and also allows for consideration of noninvertible maps. Recall that a map $\Lambda(t)$ is noninvertible if there exists a time $t\geq 0$ for which $\Lambda^{-1}(t)$ is not well defined. In general, dynamical maps are not necessarily divisible.

\begin{Theorem}[\cite{div_inf_flow}]\label{DIV}
A dynamical map $\Lambda(t)$ is divisible if and only if it is kernel non-decreasing; that is,
\begin{equation}\label{ker}
\bigforall_{0\leq s\leq t}\quad\ker\Lambda(s)\subseteq\ker\Lambda(t).
\end{equation}
\end{Theorem}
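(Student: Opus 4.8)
The plan is to prove the two implications separately; necessity is immediate, while sufficiency carries the genuine content and is where the kernel condition enters essentially. For necessity ($\Rightarrow$) I would argue straight from the definition \eqref{div}: if $\Lambda(t)=V(t,s)\Lambda(s)$ for a trace-preserving propagator $V(t,s)$, then any $X\in\ker\Lambda(s)$ gives $\Lambda(t)[X]=V(t,s)\Lambda(s)[X]=V(t,s)[0]=0$, so $X\in\ker\Lambda(t)$. As this holds for every $0\leq s\leq t$, the kernel is non-decreasing; only linearity of $V(t,s)$ is used.

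For sufficiency ($\Leftarrow$) the task is to construct, for each $t\geq s$, a trace-preserving $V(t,s)$ with $\Lambda(t)=V(t,s)\Lambda(s)$. Using the generalized-inverse formalism recalled above, I would set $V(t,s)=\Lambda(t)\Lambda^-(s)$, where $\Lambda^-(s)$ is any generalized inverse obeying $\Lambda(s)\Lambda^-(s)\Lambda(s)=\Lambda(s)$. The key observation is that $\oper-\Lambda^-(s)\Lambda(s)$ maps into $\ker\Lambda(s)$, since $\Lambda(s)[\oper-\Lambda^-(s)\Lambda(s)]=\Lambda(s)-\Lambda(s)\Lambda^-(s)\Lambda(s)=0$. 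The kernel non-decreasing hypothesis then upgrades this to $\mathrm{im}\,(\oper-\Lambda^-(s)\Lambda(s))\subseteq\ker\Lambda(s)\subseteq\ker\Lambda(t)$, whence $\Lambda(t)[\oper-\Lambda^-(s)\Lambda(s)]=0$, i.e. $V(t,s)\Lambda(s)=\Lambda(t)\Lambda^-(s)\Lambda(s)=\Lambda(t)$. This is exactly the required factorization, and it is precisely here that the kernel inclusion is indispensable — without it the factorization generally fails.

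It then remains to secure trace-preservation. On the image this is automatic: for $Y=\Lambda(s)[X]$ one has $V(t,s)[Y]=\Lambda(t)[X]$, so $\Tr(V(t,s)[Y])=\Tr(\Lambda(t)[X])=\Tr(X)=\Tr(\Lambda(s)[X])=\Tr(Y)$, using that both $\Lambda(s)$ and $\Lambda(t)$ preserve the trace. Off the image the factorization places no constraint on $V(t,s)$, so I would exploit this freedom: writing $\mathcal{B}(\mathcal{H})=\mathrm{im}\,\Lambda(s)\oplus\mathcal{C}$ and modifying $V(t,s)$ on $\mathcal{C}$ to $V(t,s)[Z]=\Tr(Z)\,\omega$ for a fixed state $\omega$, trace-preservation extends to all of $\mathcal{B}(\mathcal{H})$ while the factorization on $\mathrm{im}\,\Lambda(s)$ is left untouched (Hermiticity-preservation follows in the same way, as $\omega$ is Hermitian). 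I expect the main obstacle to lie not in the factorization — which drops out of the generalized-inverse identity almost for free once the kernel condition is invoked — but in this final check that the propagator can simultaneously be rendered trace-preserving, i.e. a legitimate map rather than a bare linear one. The crux is recognizing that trace-preservation is already forced on $\mathrm{im}\,\Lambda(s)$ by the trace-preservation of the dynamical maps, and that the residual freedom on a complement of the image is exactly enough to complete it.
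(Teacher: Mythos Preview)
The paper does not actually prove this theorem: it is quoted verbatim from reference \cite{div_inf_flow} and used as a black box, so there is no ``paper's own proof'' to compare against. Your argument is nonetheless correct and is essentially the standard one. The necessity direction is immediate, and for sufficiency your use of a generalized inverse together with the identity $\Lambda(s)\big(\oper-\Lambda^-(s)\Lambda(s)\big)=0$ is exactly the right mechanism; the kernel inclusion is what promotes this to $\Lambda(t)\big(\oper-\Lambda^-(s)\Lambda(s)\big)=0$ and hence yields the factorization. Your treatment of trace-preservation is also sound: on $\mathrm{im}\,\Lambda(s)$ it is forced by trace-preservation of $\Lambda(s)$ and $\Lambda(t)$, and on a complement the propagator is unconstrained by the factorization, so the extension $Z\mapsto\Tr(Z)\,\omega$ is legitimate. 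One small remark: if you also want $V(t,s)$ to be Hermiticity-preserving (as is usually implicit for a ``propagator''), you should choose the complement $\mathcal{C}$ to be closed under the adjoint, which is always possible since $\mathrm{im}\,\Lambda(s)$ is $\ast$-closed whenever $\Lambda(s)$ is Hermiticity-preserving.
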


Note that every invertible map is divisible in a trivial manner, as $\ker\Lambda(s)=\ker\Lambda(t)=\{0\}$. For the Pauli dynamical maps, eq. (\ref{ker}) translates to the constraints on their eigenvalues only, as will be seen later.

Now, consider the class of image non-increasing dynamical maps that satisfy
\begin{equation}\label{im}
\bigforall_{0\leq s\leq t}\quad\mathrm{Im}\Lambda(t)\subseteq
\mathrm{Im}\Lambda(s).
\end{equation}
For such maps, there exists a generalization of Theorem \ref{inv}.

\begin{Theorem}[\cite{div_inf_flow}]\label{ini}
An image non-increasing dynamical map $\Lambda(t)$ is P-divisible if
\begin{equation}
\frac{\der}{\der t}\|\Lambda(t)[X]\|_1\leq 0
\end{equation}
and CP-divisible if
\begin{equation}
\frac{\der}{\der t}\|\oper\otimes\Lambda(t)[Y]\|_1\leq 0
\end{equation}
for every Hermitian $X\in\mathcal{B}(\mathcal{H})$ and $Y\in\mathcal{B}(\mathcal{H}\otimes\mathcal{H})$, respectively.
\end{Theorem}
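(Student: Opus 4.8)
The plan is to deduce positivity (respectively complete positivity) of the propagator $V(t,s)$ from the monotone decay of the trace norm, mirroring Theorem \ref{inv}, but with the generalized inverse $\Lambda^-(s)$ replacing $\Lambda^{-1}(s)$ in $V(t,s)=\Lambda(t)\Lambda^-(s)$. The basic tool I would invoke is the Kossakowski-type characterization: a trace-preserving map $\Phi$ is positive if and only if it is a trace-norm contraction on Hermitian operators, $\|\Phi[X]\|_1\leq\|X\|_1$, and completely positive if and only if its ampliation $\oper\otimes\Phi$ is such a contraction on Hermitian $Y\in\mathcal{B}(\mathcal{H}\otimes\mathcal{H})$. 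It therefore suffices to turn the differential hypothesis into a contraction estimate for $V(t,s)$.

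First I would integrate the assumption. Since $t\mapsto\|\Lambda(t)[X]\|_1$ is non-increasing, for every $t\geq s\geq 0$ and every Hermitian $X$ one has $\|\Lambda(t)[X]\|_1\leq\|\Lambda(s)[X]\|_1$. Two consequences follow. On one hand, taking $X\in\ker\Lambda(s)$ forces $\|\Lambda(t)[X]\|_1\leq 0$, hence $X\in\ker\Lambda(t)$; the map is thus kernel non-decreasing, and by Theorem \ref{DIV} it is divisible, so the propagator $V(t,s)$ exists and is trace-preserving. On the other hand, writing an arbitrary element of the image as $Y=\Lambda(s)[X]$ and using $V(t,s)\Lambda(s)=\Lambda(t)$ gives
\begin{equation}
\|V(t,s)[Y]\|_1=\|\Lambda(t)[X]\|_1\leq\|\Lambda(s)[X]\|_1=\|Y\|_1,
\end{equation}
so that $V(t,s)$ is a trace-norm contraction on $\mathrm{Im}\,\Lambda(s)$.

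The remaining step is to promote this contraction on the image to positivity of $V(t,s)$, and here the image non-increasing condition does the essential work. Choosing $\Lambda^-(s)$ to be the Moore-Penrose inverse, $V(t,s)=\Lambda(t)\Lambda^-(s)$ annihilates the orthogonal complement of $\mathrm{Im}\,\Lambda(s)$ and sends everything into $\mathrm{Im}\,\Lambda(t)\subseteq\mathrm{Im}\,\Lambda(s)$. Consequently every state generated by the dynamics, being of the form $\Lambda(s)[\rho_0]$, is acted upon entirely within the image, where the contraction estimate together with the trace-preservation of $V(t,s)$ yields $V(t,s)[\Lambda(s)[\rho_0]]\geq 0$ by the standard positive/negative-part argument. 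This is precisely the statement that $V(t,s)$ is positive on the physically relevant domain, i.e. P-divisibility. For CP-divisibility I would run the identical argument one level up, on the ampliated family $\oper\otimes\Lambda(t)$: its trace norm is non-increasing by hypothesis, it inherits the image non-increasing property from $\Lambda(t)$, and its propagator is $\oper\otimes V(t,s)$, so the same chain of estimates yields contraction of $\oper\otimes V(t,s)$ on $\mathrm{Im}(\oper\otimes\Lambda(s))$ and hence complete positivity of $V(t,s)$.

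I expect the main obstacle to be exactly the passage from contraction on the image to genuine positivity of the propagator. Because the Hilbert-Schmidt projection onto $\mathrm{Im}\,\Lambda(s)$ is \emph{not} a trace-norm contraction, one cannot naively extend the estimate $\|V(t,s)[Y]\|_1\leq\|Y\|_1$ to all Hermitian inputs; the argument must be carefully confined to, and justified on, the image, which is precisely the content that the image non-increasing hypothesis supplies. Verifying that this restriction is the correct and sufficient notion of divisibility in the noninvertible setting is the delicate part of the proof.
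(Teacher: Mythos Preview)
The paper does not supply its own proof of this theorem: it is stated with the citation \cite{div_inf_flow} and used as an imported result, so there is no in-paper argument to compare against. Your write-up is therefore being measured against a proof that lives in the cited reference, not in this manuscript.

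That said, your outline is the standard route and is essentially what one finds in \cite{div_inf_flow}. Integrating the hypothesis to get $\|\Lambda(t)[X]\|_1\leq\|\Lambda(s)[X]\|_1$, deducing kernel non-decrease (hence divisibility via Theorem~\ref{DIV}), and then reading off $\|V(t,s)[Y]\|_1\leq\|Y\|_1$ for $Y\in\mathrm{Im}\,\Lambda(s)$ is exactly right. The place you flag as delicate is genuinely the crux: your claim that contraction on the image plus trace preservation yields positivity ``by the standard positive/negative-part argument'' is not yet a proof. The usual Kossakowski argument needs contraction on \emph{all} Hermitian inputs, and your own observation that the Hilbert--Schmidt projector onto $\mathrm{Im}\,\Lambda(s)$ fails to be a trace-norm contraction shows why one cannot simply extend. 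What the image non-increasing hypothesis actually buys is that among the many propagators compatible with $\Lambda(t)=V(t,s)\Lambda(s)$ one can \emph{choose} $V(t,s)$ so that it acts as a positive (resp.\ completely positive) trace-preserving map on the full algebra while agreeing with $\Lambda(t)\Lambda^-(s)$ on $\mathrm{Im}\,\Lambda(s)$; the freedom off the image is used to secure positivity, not ignored. Making that choice explicit is the missing step in your sketch, and it is precisely what the original reference supplies.
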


For qubit maps, Theorem \ref{ini} provides the necessary and sufficient condition for CP-divisibility \cite{Sagnik} but not P-divisibility.

Note that not all kernel non-decreasing maps are image non-increasing, as eq. (\ref{ker}) does not imply eq. (\ref{im}) but instead $\dim\mathrm{Im}\Lambda(t)\leq\dim\mathrm{Im}\Lambda(s)$. There are two natural classes of image non-increasing maps \cite{div_inf_flow}:
\begin{itemize}
\item divisible normal maps; a map $\Lambda(t)$ is by definition normal if $\Lambda^\dagger(t)\Lambda(t)=\Lambda(t)\Lambda^\dagger(t)$, where $\Lambda^\dagger(t)$ is the dual map obtained via $\Tr(X^\dagger\Lambda(t)[Y])=\Tr(\Lambda^\dagger[X^\dagger]Y)$;
\item commutative maps ($\Lambda(t)\Lambda(s)=\Lambda(s)\Lambda(t)$ for any $t,s\geq 0$) that are diagonalizable; a diagonal representation of $\Lambda(t)$ is given by
\begin{equation}
\Lambda(t)[X]=\sum_{\alpha=0}^{d^2-1}\lambda_\alpha(t)F_\alpha\Tr(G_\alpha^\dagger X)
\end{equation}
with a basis of $\Lambda$'s eigenvectors $F_\alpha$ and its dual basis $G_\alpha$ ($\Tr(F_\alpha^\dagger G_\beta)=\delta_{\alpha\beta}$).
\end{itemize}
It turns out that divisible Pauli dynamical maps belong to both of the aforementioned categories. Indeed, it is straightforward to show that $\Lambda(t)$ given by a time-dependent version of eq. (\ref{Pauli}) is self-dual ($\Lambda^\dagger(t)=\Lambda(t)$) and hence normal. Additionally, it is commutative due to
\begin{equation}
\Lambda(t)\Lambda(s)[\sigma_\alpha]=\lambda_\alpha(s)\lambda_\alpha(t)[\sigma_\alpha]
=\Lambda(s)\Lambda(t)[\sigma_\alpha].
\end{equation}
Finally, $\Lambda(t)$ is unital, and so $G_\alpha=F_\alpha=\sigma_\alpha$. Therefore, it can also be rewritten in the diagonal form
\begin{equation}
\Lambda(t)[X]=\Phi_0[X]+\sum_{\alpha=1}^{3}\lambda_\alpha(t)
\sigma_\alpha\Tr(\sigma_\alpha X),
\end{equation}
where $\Phi_0[X]=\frac 12 \mathbb{I}\Tr X$ is the completely depolarizing channel.

For the bistochastic qubit evolution, we show that divisibility can be determined using only the eigenvalues $\lambda_\alpha(t)$ of the corresponding Pauli channel.

\begin{Theorem}\label{PC}
Any legitimate Pauli dynamical map $\Lambda(t)$ is
\begin{enumerate}[label={\bf(\arabic*)}]
\item divisible if and only if $\lambda_\alpha(t)\geq 0$ and
\begin{equation}
\lambda_\alpha(s)=0\qquad\implies\qquad\lambda_\alpha(t\geq s)=0;
\end{equation}
\item P-divisible if and only if it is divisible and
\begin{equation}
\dot{\lambda}_\alpha(t)\leq 0;
\end{equation}
\item CP-divisible if and only if it is P-divisible, satisfies
\begin{equation}\label{gamma}
2\frac{\der}{\der t}\ln[\lambda_\alpha(t)]\geq
\frac{\der}{\der t}\ln[\lambda_1(t)\lambda_2(t)\lambda_3(t)]
\end{equation}
whenever all $\lambda_\alpha(t)>0$, and never has exactly two non-zero eigenvalues $\lambda_\alpha(t)$ ($\alpha=1,2,3$) at any time $t\geq 0$.
\end{enumerate}
\end{Theorem}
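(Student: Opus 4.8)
The plan is to reduce everything to statements about the eigenvalue triple $(\lambda_1,\lambda_2,\lambda_3)$ via the diagonal form $\Lambda(t)[X]=\Phi_0[X]+\sum_{\alpha=1}^3\lambda_\alpha(t)\sigma_\alpha\Tr(\sigma_\alpha X)$, keeping in mind $\lambda_\alpha(0)=1$ and continuity. For \textbf{(1)}, I would first observe that $\Lambda(t)[X]=0$ forces the identity component of $X$ to vanish and its $\sigma_\alpha$ component to vanish whenever $\lambda_\alpha(t)\neq0$, so $\ker\Lambda(t)=\mathrm{span}\{\sigma_\alpha:\lambda_\alpha(t)=0\}$. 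Feeding this into the kernel non-decreasing criterion of Theorem \ref{DIV}, the inclusion $\ker\Lambda(s)\subseteq\ker\Lambda(t)$ for $s\le t$ is equivalent, sector by sector, to the implication $\lambda_\alpha(s)=0\Rightarrow\lambda_\alpha(t\ge s)=0$. The non-negativity $\lambda_\alpha(t)\ge0$ is then a consequence rather than a hypothesis: since $\lambda_\alpha(0)=1$ is continuous, any passage to negative values would create a first zero at some $\tau$, after which the implication pins $\lambda_\alpha\equiv0$, contradicting negativity. Hence divisibility is equivalent to both stated conditions.

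For \textbf{(2)}, I would use that a divisible Pauli map is normal and commutative-diagonalizable, hence image non-increasing, so Theorem \ref{ini} applies. Note P-divisibility presupposes divisibility (a positive propagator is in particular a legitimate trace-preserving one), so (1) and $\lambda_\alpha\ge0$ are in force and only $\dot\lambda_\alpha\le0$ remains. For sufficiency I would evaluate the trace norm on a Hermitian $X$ with Bloch coefficients $c_\alpha=\frac12\Tr(\sigma_\alpha X)$ and $c_0=\frac12\Tr X$, giving $\|\Lambda(t)[X]\|_1=2\max\!\left(|c_0|,\sqrt{\sum_\alpha\lambda_\alpha^2(t)c_\alpha^2}\right)$; its derivative is $\le0$ for every $X$ exactly when $\lambda_\alpha\dot\lambda_\alpha\le0$ for all $\alpha$, i.e. $\dot\lambda_\alpha\le0$, and Theorem \ref{ini} converts this into P-divisibility. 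Since that theorem gives only a sufficient condition for P-divisibility, for necessity I would instead inspect the propagator directly: on each surviving sector $V(t,s)$ has eigenvalue $\mu_\alpha(t,s)=\lambda_\alpha(t)/\lambda_\alpha(s)$, and positivity of a unital qubit map is equivalent to $|\mu_\alpha|\le1$, that is $\lambda_\alpha(t)\le\lambda_\alpha(s)$, i.e. $\dot\lambda_\alpha\le0$.

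For \textbf{(3)} Theorem \ref{ini} is necessary and sufficient, so CP-divisibility is complete positivity of $V(t,s)$ for all $t\ge s$; this propagator is again a Pauli map with eigenvalues $\mu_\alpha(t,s)=\lambda_\alpha(t)/\lambda_\alpha(s)$ on the sectors alive at time $s$, whose CP I would test through the Fujiwara-Algoet conditions (\ref{Fuji-2}), equivalently non-negativity of the four Choi eigenvalues $\frac14(1\pm\mu_1\pm\mu_2\pm\mu_3)$. Complete positivity implies positivity, so CP-divisibility fixes parts (1)--(2). On any interval where all three $\lambda_\alpha>0$ the map is invertible and the propagator unique; expanding the binding Choi eigenvalue $\frac14(1+\mu_1-\mu_2-\mu_3)$ about $t=s$, where it starts at $0$, turns its non-negativity into $2\frac{\der}{\der t}\ln\lambda_\alpha\ge\frac{\der}{\der t}\ln(\lambda_1\lambda_2\lambda_3)$, which is precisely (\ref{gamma}) and coincides with $\gamma_\alpha(t)\ge0$ for the Pauli generator; conversely integrating $\gamma_\alpha\ge0$ reproduces the finite Fujiwara-Algoet inequalities, so in the all-positive region (\ref{gamma}) is equivalent to CP of the propagator, recovering the invertible analysis of Theorem \ref{inv}.

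The step I expect to be the main obstacle is the clause forbidding exactly two nonzero eigenvalues, since it is a finite obstruction invisible to the rates. I would argue that if some $t_0$ has, say, $\lambda_1(t_0),\lambda_2(t_0)>0$ and $\lambda_3(t_0)=0$, then because $\lambda_3(0)=1$ is continuous and (by P-divisibility) non-increasing, $\lambda_3>0$ on $[0,\tau)$ with a first zero at $\tau\le t_0$, while $\lambda_1,\lambda_2$ stay positive near $\tau$. Letting $s\to\tau^-$ and $t\to\tau^+$ forces $\mu_1,\mu_2\to1$, whereas $\mu_3=\lambda_3(t)/\lambda_3(s)\to0$ is now rigidly determined (no generalized-inverse freedom remains because $\lambda_3(s)>0$), so the Choi eigenvalue $\frac14(1-\mu_1-\mu_2+\mu_3)\to-\frac14<0$ and $V(t,s)$ cannot be CP. This proves necessity of the clause and shows why it is separate from (\ref{gamma}). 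For the converse I would exploit that under this clause the number of nonzero eigenvalues drops straight from three to one or zero, never through two; a short case check of $V(t,s)$ — the three-to-three case governed by (\ref{gamma}), and the three-to-one, one-to-one and depolarizing cases where the surviving eigenvalue satisfies $\mu\le1$ and the killed sectors are assigned eigenvalue $0$ by the generalized inverse — makes all four Choi eigenvalues $\frac14(1\pm\mu_1\pm\mu_2\pm\mu_3)\ge0$, and composition of CP propagators extends complete positivity to every pair $t\ge s$.
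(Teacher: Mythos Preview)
Your argument is correct and complete; it differs from the paper's proof mainly in two places. For P-divisibility sufficiency you compute the qubit trace norm $\|\Lambda(t)[X]\|_1=2\max\bigl(|c_0|,\sqrt{\sum_\alpha\lambda_\alpha^2 c_\alpha^2}\bigr)$ and invoke Theorem~\ref{ini} for image non-increasing maps, whereas the paper instead splits time into regimes (all three $\lambda_\alpha>0$; two nonzero; one nonzero; all zero) and writes down the propagator $V(t,s)$ explicitly in each regime, checking $|\mu_\alpha|\le1$. For the CP-divisibility obstruction ``never exactly two nonzero eigenvalues'' you give a direct limit argument: pick $s\nearrow\tau$, $t\searrow\tau$ at the first zero of one eigenvalue, so the unique propagator has $\mu_1,\mu_2\to1$, $\mu_3\to0$ and the Choi eigenvalue $\tfrac14(1-\mu_1-\mu_2+\mu_3)\to-\tfrac14$. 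The paper instead imports an external static result (Lemma~\ref{nec}, from \cite{Cirac,Davalos}) stating that a noninvertible Pauli channel is CP-divisible as a channel iff at most one $\lambda_k\neq0$, and deduces the obstruction from that. Your route is more self-contained and avoids the channel-divisibility literature; the paper's route is shorter once Lemma~\ref{nec} is granted and makes the time-regime structure $0\le t<t_1<t_2<t_3$ explicit. Part~(1) is handled the same way in both (via Theorem~\ref{DIV}), though you add the useful observation that $\lambda_\alpha\ge0$ follows automatically from continuity, $\lambda_\alpha(0)=1$, and the kernel condition.
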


The proof can be found in the Appendix. Note that the conditions for divisibility and CP-divisibility follow from the general results in Theorems \ref{DIV} and \ref{ini}. However, the necessary and sufficient conditions for P-divisibility are not known even for the qubit maps, and therefore their derivation requires more attention.

As an example of an indivisible dynamical map, consider the depolarizing channel
\begin{equation}
\Lambda(t)[\rho]=(1-p(t))\rho+\frac{p(t)}{3}\sum_{\alpha=1}^{3}\sigma_\alpha
\rho\sigma_\alpha
\end{equation}
with the eigenvalues
\begin{equation}
\lambda_\alpha(t)=1-\frac{4}{3}p(t).
\end{equation}
Note that this map is completely positive if and only if $-1/3\leq\lambda_\alpha(t)\leq 1$. Hence, if we choose $\lambda_\alpha(t)=|\cos\omega t|$, then, by point (1) in Theorem \ref{PC}, $\Lambda(t)$ is indivisible. Indeed, $\lambda_\alpha(\pi/2\omega)=0$ does not imply $\lambda_\alpha(t)=0$ for all $t>\pi/2\omega$. However, this can be easily remedied if one modifies the eigenvalues, so that
\begin{equation}
\lambda_\alpha(t)=\left\{\begin{aligned}
\cos\omega t,\qquad &t\leq\frac{\pi}{2\omega},\\
0,\qquad &t>\frac{\pi}{2\omega}.
\end{aligned}\right.
\end{equation}
The corresponding depolarizing channel is indeed divisible.

An instructive example of the Markovian qubit evolution described by a noninvertible Pauli dynamical map was proposed in ref. \cite{div_inf_flow}. Let us take the qubit dynamical map generated via
\begin{equation}\label{gen_PC}
\mathcal{L}(t)=\sum_{\alpha=1}^3\gamma_\alpha(t)\mathcal{L}_\alpha,\qquad
\mathcal{L}_\alpha[X]=\frac 12 (\sigma_\alpha X\sigma_\alpha-X).
\end{equation}
The solution of $\dot{\Lambda}(t)=\mathcal{L}(t)\Lambda(t)$ is the Pauli dynamical map with eigenvalues
\begin{equation}\label{eigenv_PC}
\lambda_\alpha(t)=\exp[\Gamma_\alpha(t)-\Gamma_0(t)],
\end{equation}
where $\gamma_0=\sum_{\alpha=1}^3\gamma_\alpha$ and $\Gamma_\alpha(t)=\int_0^t\gamma_\alpha(\tau)\der\tau$. Note that $\Lambda(t)$ is invertible if and only if $\Gamma_\alpha(t)$ are all finite for finite times $t\geq 0$. Now, if $\Gamma_3(t_1)=\infty$ at a finite time $t_1$, then the eigenvalues of the dynamical map $\lambda_1(t_1)=\lambda_2(t_1)=0$. Also, the divisibility property implies that $\lambda_1(t)=\lambda_2(t)=0$ for $t\geq t_1$. If there is a time $t_2>t_1$ such that $\Gamma_2(t_2)=\infty$, then one has $\lambda_3(t)=0$ for any $t\geq t_2$. From now on, $\Gamma_1(t)$ can be arbitrary, and the system stays in the maximally mixed state. Therefore, one can conclude the general property for time-local generators. Namely, a Pauli dynamical map is CP-divisible if and only if all $\gamma_\alpha(t)\geq 0$ up to a time $t<t_1$, and $\gamma_{\alpha_\ast}(t)\geq 0$ for one fixed $\alpha_\ast$ until $t<t_2$.

\section{Mixtures of Pauli dynamical maps}

A special class of evolution is given by convex combinations of legitimate quantum dynamical maps. In our previous work \cite{CCMK}, we analyzed the behaviour of the singular points of the dynamical maps after mixing, but we completely omitted the discussion of memory effects. In this paper, our main goal is to find the relation between the divisibility (and hence degrees of non-Markovianity) of the phase-damping channels $\Lambda_\alpha(t)$ given in eq. (\ref{deph}) and their mixtures. All the results are general unless stated otherwise.

First, let us analyze the properties of $\Lambda_\alpha(t)$, which are a trivial case of mixing with $x_\alpha=1$ and $x_\beta=0$ for $\beta\neq\alpha$. Obviously, these maps are invertible if and only if $\lambda(t)\neq 0$ at all $t\geq 0$. Now, from Theorem \ref{PC}, we see that $\Lambda_\alpha(t)$ are indivisible if and only if the eigenvalue $\lambda(t)$ does not stay equal to zero after vanishing at some point in time. Good examples of $\lambda(t)$ corresponding to indivisible $\Lambda_\alpha(t)$ are
\begin{align}
\lambda(t)=&\cos\omega t,\\
\lambda(t)=&|\cos\omega t|,\\
\lambda(t)=&e^{-Zt}\cos\omega t,
\end{align}
with $Z,\omega>0$ \cite{CCMK}. Due to the maps having only one free parameter $\lambda(t)$, there are no $\Lambda_\alpha(t)$ that are P-divisible but not CP-divisible. However, one can still produce interesting examples for divisible but not CP-divisible maps after fixing a nonmonotonic function $\lambda(t)$.

\begin{Example}\label{EXdiv}
A divisible but not CP-divisible $\Lambda_\alpha(t)$ follows from
\begin{equation}
\lambda(t)=\left\{
\begin{aligned}
\frac{1}{c}\left(1-\frac{t}{T}\right)\left[a\left(\frac{t}{T}\right)^2
+b\frac{t}{T}+c\right],&\quad t<T,\\
0,&\quad t\geq T,
\end{aligned}\right.
\end{equation}
where the parameters $a$, $b$, $c$ are chosen in such a way that
\begin{equation}
0<b<a,\qquad \frac{(a+b)^2}{4a}\leq c<\frac{a^2+ab+b^2}{3a}.
\end{equation}
\end{Example}

Finally, from Corollary \ref{PC}, we see that CP-divisible phase-damping channels $\Lambda_\alpha(t)$ correspond to monotonically decreasing $\lambda(t)$, or equivalently $\dot{\lambda}(t)\leq 0$.

\begin{Example}
One gets a CP-divisible $\Lambda_\alpha(t)$ for
\begin{equation}
\lambda(t)=\left\{\begin{aligned}
\cos\omega t,&\quad t\leq\frac{\pi}{2\omega},\\
0,&\quad t>\frac{\pi}{2\omega}.
\end{aligned}\right.
\end{equation}
\end{Example}

This way, we provide a full characterization of the divisibility properties for $\Lambda_\alpha(t)$. Now, we move our attention to their convex combinations $\Lambda(t)=\sum_{\alpha=1}^{3}x_\alpha\Lambda_\alpha(t)$. In ref. \cite{CCMK}, it was proven that invertible maps do not produce noninvertible mixtures. Moreover, the non-Markovianity of $\Lambda(t)$ for invertible $\Lambda_\alpha(t)$ has already been analyzed in ref. \cite{CCMS}. Therefore, our main focus and original results are the Pauli dynamical maps obtained from noninvertible phase-damping channels.

Let us start with the mixtures of $\Lambda_\alpha(t)$ that satisfy the most restrictive conditions, and then move to the more general choices. This way, we are going to observe that the less constraints is put on $\Lambda_\alpha(t)$, the wider classes of dynamical maps follow from taking their convex combinations. First, let us provide a generalization of the results for mixtures of CP-divisible invertible Pauli maps from ref. \cite{CCMS}.

\begin{Proposition}\label{CPP}
Mixtures of CP-divisible $\Lambda_\alpha(t)$ always produce P-divisible $\Lambda(t)$.
\end{Proposition}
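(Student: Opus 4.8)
The plan is to reduce the claim to the eigenvalue characterization of Theorem \ref{PC}, applied to the single damping function $\lambda(t)$ shared by all three constituents in eq.~(\ref{mix}). First I would extract the only hypotheses I actually need: by the characterization quoted just before this proposition, CP-divisibility of each $\Lambda_\alpha(t)$ is equivalent to $\lambda(t)$ being monotonically nonincreasing, i.e. $\dot\lambda(t)\leq 0$; and since CP-divisibility implies divisibility, point~(1) of Theorem \ref{PC} also gives $\lambda(t)\geq 0$. Since $\Lambda(t)=\sum_\alpha x_\alpha\Lambda_\alpha(t)$ is a convex combination of legitimate maps, it is itself a legitimate Pauli dynamical map, so Theorem \ref{PC} applies to it. I then express its eigenvalues through eq.~(\ref{lat}), $\lambda_\alpha(t)=x_\alpha+(1-x_\alpha)\lambda(t)$, with weights $x_\alpha\in[0,1]$ and $\sum_\alpha x_\alpha=1$.

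With this parametrization, verifying P-divisibility via point~(2) of Theorem \ref{PC} amounts to two checks. The monotonicity requirement is immediate: differentiating eq.~(\ref{lat}) gives $\dot\lambda_\alpha(t)=(1-x_\alpha)\dot\lambda(t)\leq 0$, since $1-x_\alpha\geq 0$ and $\dot\lambda(t)\leq 0$. The step needing the most care---and the only (modest) obstacle---is the divisibility condition of point~(1). Nonnegativity is clear, as $\lambda_\alpha(t)$ is a combination of the nonnegative quantities $1$ and $\lambda(t)$ with nonnegative coefficients, hence $\lambda_\alpha(t)\geq 0$. For the kernel (once-zero-stays-zero) condition I would split on the weight $x_\alpha$: if $x_\alpha>0$ then $\lambda_\alpha(t)\geq x_\alpha>0$ can never vanish, so the implication holds vacuously; if $x_\alpha=0$ then $\lambda_\alpha(t)=\lambda(t)$, and $\lambda(s)=0$ together with $\lambda\geq 0$ and $\dot\lambda\leq 0$ forces $\lambda(t)=0$ for all $t\geq s$, so $\lambda_\alpha$ remains at zero. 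Hence $\Lambda(t)$ is divisible.

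Combining the two checks, point~(2) of Theorem \ref{PC} certifies that $\Lambda(t)$ is P-divisible, which proves the proposition. I expect no genuine difficulty here: once the shared-eigenvalue parametrization (\ref{lat}) is in place, the whole argument is a one-line differentiation for the monotonicity part and a short case split on $x_\alpha$ for the kernel part, both driven solely by the sign facts $\lambda(t)\geq 0$ and $\dot\lambda(t)\leq 0$ inherited from the constituents. It is worth noting in passing that the converse fails---the mixture need not be CP-divisible---since condition~(\ref{gamma}) couples the three eigenvalues and is not controlled by the sign of $\dot\lambda$ alone; this asymmetry is precisely what makes the proposition a genuine one-way statement.
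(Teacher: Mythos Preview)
Your argument is correct and follows the same route as the paper: write the mixture's eigenvalues via eq.~(\ref{lat}) and differentiate to get $\dot\lambda_\alpha(t)=(1-x_\alpha)\dot\lambda(t)\leq 0$ from the hypothesis $\dot\lambda(t)\leq 0$. If anything, you are more careful than the paper, which only checks the monotonicity part of Theorem~\ref{PC}(2) and leaves the divisibility prerequisite (nonnegativity and the once-zero-stays-zero condition) implicit; your case split on $x_\alpha=0$ versus $x_\alpha>0$ makes that step explicit.
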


\begin{proof}
Recall that the eigenvalues of $\Lambda(t)$ are given by $\lambda_\alpha(t)=x_\alpha+(1-x_\alpha)\lambda(t)$. Therefore, the P-divisibility condition for $\Lambda(t)$, which is
\begin{equation}\label{dot}
\dot{\lambda}_\alpha(t)=(1-x_\alpha)\dot{\lambda}(t)\leq 0,
\end{equation}
holds if and only if $\dot{\lambda}(t)\leq 0$. On the other hand, $\dot{\lambda}(t)\leq 0$ is only necessary for CP-divisibility of $\Lambda(t)$.
\end{proof}

Note that if all $x_\alpha\neq 0$, then the resulting $\Lambda(t)$ is invertible. Now, one can also use CP-divisible $\Lambda_\alpha(t)$ to produce CP-divisible $\Lambda(t)$.

\begin{Proposition}\label{CPCP}
Mixtures of CP-divisible $\Lambda_\alpha(t)$ produce CP-divisible $\Lambda(t)$ if and only if $x_\alpha\neq 0$ for $\alpha=1,2,3$ and
\begin{equation}\label{gamma2}
\lambda(t)\geq\frac{1}{1-x_k}\left(-x_k+\sqrt{\frac{(x_i-x_k)(x_j-x_k)}{(1-x_i)(1-x_j)}}\right),
\end{equation}
where $\{x_i,x_j,x_k\}=\{x_1,x_2,x_3\}$.
The resulting maps $\Lambda(t)$ are always invertible.
\end{Proposition}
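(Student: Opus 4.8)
The plan is to reduce the CP-divisibility of the mixture $\Lambda(t)$ to a positivity condition on the effective decoherence rates, and then to solve the resulting inequality for $\lambda(t)$. Since $\Lambda_\alpha(t)$ is CP-divisible precisely when $\dot\lambda(t)\le 0$, I would apply point (3) of Theorem \ref{PC} to the mixture, whose eigenvalues are $\lambda_\alpha(t)=x_\alpha+(1-x_\alpha)\lambda(t)$ by (\ref{lat}). P-divisibility of $\Lambda(t)$ is already granted by Proposition \ref{CPP}, so the substantive content is inequality (\ref{gamma}) together with the clause forbidding exactly two nonzero eigenvalues. Writing $a_\alpha=\dot\lambda_\alpha/\lambda_\alpha$ for the logarithmic derivatives, condition (\ref{gamma}) for the index $k$ reads $a_k\ge a_i+a_j$; equivalently, the effective rate $\gamma_k(t)\ge 0$. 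Inserting $\dot\lambda_\alpha=(1-x_\alpha)\dot\lambda$ and dividing by $\dot\lambda<0$ (the inequality is vacuous where $\dot\lambda=0$) turns this into
\begin{equation}
\frac{1-x_k}{\lambda_k}\le\frac{1-x_i}{\lambda_i}+\frac{1-x_j}{\lambda_j},
\end{equation}
required at every $t$ for which all $\lambda_\alpha(t)>0$.

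Next I would show that only one of the three inequalities is binding. Using $1-x_\alpha=(1-\lambda_\alpha)/(1-\lambda)$ and cancelling the common factor $1-\lambda>0$, each condition takes the form $\Phi_k\le 0$ with $\Phi_k=1/\lambda_k-1/\lambda_i-1/\lambda_j+1$. Because $\lambda_\alpha$ is increasing in $x_\alpha$, the quantities $\Phi_k$ are ordered so that $\Phi_k$ is largest for the index of the smallest weight; hence the single inequality for that $k$ implies the other two. This also explains why the square root in (\ref{gamma2}) is real precisely for an extremal index: one needs $(x_i-x_k)(x_j-x_k)\ge 0$, which holds exactly when $x_k$ is the smallest (or largest) of the three.

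I would then solve the binding inequality for $\lambda$. Clearing the positive denominators gives a quadratic in $\lambda$, and the algebra collapses thanks to the constraint $\sum_\alpha x_\alpha=1$ together with the identity
\begin{equation}
(1-x_i)(1-x_j)-(x_i-x_k)(x_j-x_k)=2x_k(1-x_k),
\end{equation}
which lets the quadratic be rewritten as $(1-x_i)(1-x_j)\lambda_k^2\ge(x_i-x_k)(x_j-x_k)$. Taking the relevant (positive $\lambda_k$) branch of the square root and recalling $\lambda_k=x_k+(1-x_k)\lambda$ then reproduces precisely the bound (\ref{gamma2}). The crux of the argument, and the step I expect to cost the most work, is exactly this condensation of the three rate-positivity inequalities into the single closed-form bound: isolating the binding index and carrying out the quadratic algebra while tracking the correct root and the physical branch $\lambda\le 1$.

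Finally, I would settle invertibility and the necessity of $x_\alpha\neq 0$. Since the bound (\ref{gamma2}) forces $\lambda_k\ge\sqrt{(x_i-x_k)(x_j-x_k)/[(1-x_i)(1-x_j)]}>0$ whenever the two larger weights strictly exceed $x_k$, all eigenvalues satisfy $\lambda_\alpha\ge\lambda_k>0$, so $\Lambda(t)$ is invertible and the ``exactly two nonzero eigenvalues'' clause of Theorem \ref{PC} never arises; this gives sufficiency. For necessity of the weight condition, suppose one weight vanishes, say $x_\alpha=0$ with $x_\beta,x_\gamma>0$, so that $\lambda_\alpha(t)=\lambda(t)$. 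For a noninvertible $\Lambda_\alpha(t)$ this eigenvalue reaches zero while $\lambda_\beta\to x_\beta>0$ and $\lambda_\gamma\to x_\gamma>0$ stay positive, producing a time with exactly two nonzero eigenvalues and violating CP-divisibility; for an invertible $\Lambda_\alpha(t)$ the bound (\ref{gamma2}) with $x_k=0$ degenerates to $\lambda\ge 1$, i.e.\ trivial dynamics. Hence $x_\alpha\neq 0$ for all $\alpha$ is necessary, which closes both directions of the ``if and only if''.
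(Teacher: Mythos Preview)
Your proposal is correct and follows essentially the same approach as the paper: apply Theorem~\ref{PC}(3) to the eigenvalues $\lambda_\alpha=x_\alpha+(1-x_\alpha)\lambda$, reduce the rate-positivity condition to a quadratic inequality in $\lambda$, and solve. Your algebra is somewhat cleaner---the substitution $1-x_\alpha=(1-\lambda_\alpha)/(1-\lambda)$ and the identity you record collapse the condition to $(1-x_i)(1-x_j)\lambda_k^2\ge(x_i-x_k)(x_j-x_k)$, which is $(1-x_k)$ times the paper's quadratic, and you make explicit that only the smallest-weight index is binding---while your necessity argument for $x_\alpha\neq 0$ also covers the invertible case (where the bound degenerates to $\lambda\ge 1$), which the paper leaves implicit.
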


\begin{proof}
First, let us show that $\Lambda(t)$ is always invertible. We know that mixtures of invertible dynamical maps always produce invertible dynamical maps \cite{CCMK}. For noninvertible $\Lambda_\alpha(t)$, there is a time $t_\ast\geq 0$ for which $\lambda(t_\ast)=0$. Now, $\Lambda_\alpha(t)$ are divisible, so at any $t\geq t_\ast$ the eigenvalues of the mixture are $\lambda_\alpha(t)=x_\alpha+(1-x_\alpha)\lambda(t_\ast)=x_\alpha$. If $\Lambda(t)$ is CP-divisible and noninvertible, there have to be two $x_\alpha=0$. However,  this case corresponds to $\Lambda(t)=\Lambda_\alpha(t)$, so there is no mixing. Hence, no CP-divisible noninvertible maps result from our mixtures.

Now, if $\Lambda(t)$ is invertible, then its CP-divisibility condition reads
\begin{equation}
\frac{2x_\alpha}{x_\alpha+(1-x_\alpha)\lambda(t)}\geq
\sum_{\beta=1}^3\frac{x_\beta}{x_\beta+(1-x_\beta)\lambda(t)}-1,
\end{equation}
which is a direct consequence of condition (\ref{gamma}), $\lambda(t)\geq 0$, and $\dot{\lambda}(t)\leq 0$. After eliminating the fractions, one gets
\begin{equation}\label{ineq}
\begin{split}
\lambda\Big[&\lambda^2(1-x_i)(1-x_j)(1-x_k)+2\lambda(1-x_i)(1-x_j)x_k\\&-x_ix_j+x_jx_k
+x_ix_k-x_ix_jx_k\Big]\geq 0,
\end{split}
\end{equation}
where $\{x_i,x_j,x_k\}=\{x_1,x_2,x_3\}$.
The solutions of the quadratic equation are
\begin{equation}
\lambda_\pm(t)=\frac{1}{1-x_k}\left(-x_k\pm\sqrt{\frac{(x_i-x_k)(x_j-x_k)}{(1-x_i)(1-x_j)}}\right).
\end{equation}
Hence, eq. (\ref{ineq}) reduces to
\begin{equation}
\lambda(\lambda-\lambda_+)(\lambda-\lambda_-)\geq 0.
\end{equation}
Observe that $\lambda(t)\geq 0$ and $\lambda_-(t)\leq 0$, and therefore the above inequality holds for $\lambda(t)\geq\lambda_+(t)$, which is exactly eq. (\ref{gamma2}).
\end{proof}

If $\min_{t\geq 0}\lambda(t)=0$, then eq. (\ref{gamma2}) reduces to the constraint that involves only $x_\alpha$,
\begin{equation}
\frac{1}{1-x_k}\left(-x_k+\sqrt{\frac{(x_i-x_k)(x_j-x_k)}{(1-x_i)(1-x_j)}}\right)\leq 0.
\end{equation}
This is equivalent to
\begin{equation}
x_ix_jx_k+x_ix_j-x_jx_k-x_ix_k\leq 0,
\end{equation}
which is exactly the CP-divisibility condition for mixtures of Markovian semigroups obtained in \cite{Nina}. Therefore, Propositions \ref{CPP} and \ref{CPCP} show that the statements known for the convex combinations of Markovian semigroups generalize to combinations of any legitimate Pauli dynamical maps.

\begin{Example}
For $x_\alpha=1/3$, mixtures $\Lambda(t)$ of CP-divisible $\Lambda_\alpha(t)$ are also CP-divisible.
\end{Example}

Proposition \ref{CPCP} also provides a generalization of the theorem in ref. \cite{Jagadish2} which states that a mixture of two Markovian Pauli semigroups always results in a dynamical map that is not CP-divisible.

\begin{Example}
Let us take $x_1=x_2=1/2$, $x_3=0$, and
\begin{equation}
\lambda(t)=\left\{\begin{aligned}
\cos\omega t,&\quad t\leq\frac{\pi}{2\omega},\\
0,&\quad t>\frac{\pi}{2\omega}.
\end{aligned}\right.
\end{equation}
The resulting map $\Lambda(t)$ possesses the following eigenvalues,
\begin{equation}
\begin{split}
\lambda_1(t)=\lambda_2(t)=&\left\{\begin{aligned}
\frac{1+\cos\omega t}{2},&\quad t\leq\frac{\pi}{2\omega},\\
\frac 12,&\quad t>\frac{\pi}{2\omega},\qquad
\end{aligned}\right.\\
\lambda_3(t)=&\left\{\begin{aligned}
\cos\omega t,&\quad t\leq\frac{\pi}{2\omega},\\
0,&\quad t>\frac{\pi}{2\omega}.\qquad
\end{aligned}\right.
\end{split}
\end{equation}
Observe that $\Lambda(t)$ is not CP-divisible, as these is a time for which it has two non-zero eigenvalues. However, it is P-divisible due to $\dot{\lambda}_\alpha(t)\leq 0$.
\end{Example}

The dynamical map $\Lambda(t)$ from the above example can be generated using a time-local generator $\mathcal{L}(t)$ given by eq. (\ref{gen}) with the decoherence rates
\begin{equation}
\begin{split}
\gamma_1(t)=\gamma_2(t)=&\left\{\begin{aligned}
\frac{\omega}{2}\tan(\omega t),&\quad t\leq\frac{\pi}{2\omega},\\
0,&\quad t>\frac{\pi}{2\omega},\qquad
\end{aligned}\right.\\
\gamma_3(t)=&\left\{\begin{aligned}
-\frac{\omega}{2}\tan(\omega t)\frac{1-\cos\omega t}{1+\cos\omega t},&\quad t\leq\frac{\pi}{2\omega},\\
0,&\quad t>\frac{\pi}{2\omega},\qquad
\end{aligned}\right.
\end{split}
\end{equation}
Interestingly, despite the rates being infinite at one point,
\begin{equation}
\gamma_1\left(\frac{\pi}{2\omega}\right)=\gamma_2\left(\frac{\pi}{2\omega}\right)
=\infty,\qquad \gamma_3\left(\frac{\pi}{2\omega}\right)=-\infty,
\end{equation}
their sums are finite,
\begin{equation}
\gamma_1\left(\frac{\pi}{2\omega}\right)+\gamma_3\left(\frac{\pi}{2\omega}\right)
=\gamma_2\left(\frac{\pi}{2\omega}\right)+\gamma_3\left(\frac{\pi}{2\omega}\right)
=\omega.
\end{equation}
This provides the first example of the time-local generator for a PnCP-divisible dynamical map that is not invertible.

Next, we observe what happens if one mixes the Pauli maps $\Lambda_\alpha(t)$ that are not even P-divisible. Such maps satisfy $\lambda(t_\ast)=0$ $\implies$ $\lambda(t\geq t_\ast)=0$ but violate condition $\dot{\lambda}(t)\leq 0$. In eq. (\ref{dot}), we have already shown that $\dot{\lambda}(t)\leq 0$ if and only if $\dot{\lambda}_\alpha(t)\leq 0$, which is the P-divisibility condition for $\Lambda(t)$. Therefore, we prove the following proposition.

\begin{Proposition}
Convex combinations of divisible but not P-divisible $\Lambda_\alpha(t)$ lead to divisible but not P-divisible $\Lambda(t)$.
\end{Proposition}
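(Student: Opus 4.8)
The plan is to reduce everything to the eigenvalue characterization in Theorem \ref{PC}, since the mixture $\Lambda(t)$ is again a Pauli dynamical map whose eigenvalues $\lambda_\alpha(t)=x_\alpha+(1-x_\alpha)\lambda(t)$ are controlled by the single function $\lambda(t)$ together with the weights $x_\alpha$. By hypothesis each $\Lambda_\alpha(t)$ is divisible but not P-divisible, which by Theorem \ref{PC} means precisely that $\lambda(t)\geq 0$, that $\lambda(s)=0$ forces $\lambda(t)=0$ for all $t\geq s$, and yet $\dot{\lambda}(t)>0$ at some instant. I would establish the two halves of the claim — divisibility of $\Lambda(t)$ and the failure of its P-divisibility — separately.

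For divisibility I would verify the two conditions in point (1) of Theorem \ref{PC} directly for the mixture eigenvalues. Positivity $\lambda_\alpha(t)\geq 0$ is immediate, since each $x_\alpha\in[0,1]$ and $\lambda(t)\geq 0$, so $\lambda_\alpha(t)$ is a sum of nonnegative terms. The kernel condition needs a short case analysis: $\lambda_\alpha(s)=x_\alpha+(1-x_\alpha)\lambda(s)=0$ can hold only if both nonnegative summands vanish, i.e. $x_\alpha=0$ and $\lambda(s)=0$ simultaneously. Invoking the divisibility of $\Lambda_\alpha(t)$ then yields $\lambda(t)=0$ for all $t\geq s$, whence $\lambda_\alpha(t)=0$ throughout that interval. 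This shows $\Lambda(t)$ is kernel non-decreasing and hence divisible.

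For the failure of P-divisibility I would appeal to the identity $\dot{\lambda}_\alpha(t)=(1-x_\alpha)\dot{\lambda}(t)$ already recorded in (\ref{dot}). Since the weights sum to one, at least one index satisfies $x_\alpha<1$, so $1-x_\alpha>0$ and the sign of $\dot{\lambda}_\alpha(t)$ matches that of $\dot{\lambda}(t)$; consequently the P-divisibility requirement $\dot{\lambda}_\alpha(t)\leq 0$ from point (2) of Theorem \ref{PC} collapses to the single inequality $\dot{\lambda}(t)\leq 0$. Because the constituents $\Lambda_\alpha(t)$ are not P-divisible, $\dot{\lambda}(t)>0$ somewhere, and therefore $\Lambda(t)$ cannot be P-divisible either.

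I expect the only delicate point to be the kernel step: one must argue that a vanishing mixture eigenvalue can arise solely through the coincidence $x_\alpha=0$ and $\lambda(s)=0$, and then propagate this forward using the divisibility of the underlying $\Lambda_\alpha(t)$ rather than re-deriving it. Everything else is a direct transcription of Theorem \ref{PC} combined with the linear relation between $\lambda_\alpha(t)$ and $\lambda(t)$.
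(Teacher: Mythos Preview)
Your proposal is correct and follows the same approach as the paper: both reduce to the eigenvalue criteria of Theorem~\ref{PC} via the linear relation $\lambda_\alpha(t)=x_\alpha+(1-x_\alpha)\lambda(t)$ and the identity $\dot{\lambda}_\alpha(t)=(1-x_\alpha)\dot{\lambda}(t)$ from eq.~(\ref{dot}). The paper's argument is terser---it essentially states only the P-divisibility equivalence and leaves the divisibility (kernel) part implicit---whereas you spell out the case analysis showing $\lambda_\alpha(s)=0$ forces $x_\alpha=0$ and $\lambda(s)=0$, which is a welcome addition but not a different route.
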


If every $x_\alpha\neq 0$, then the mixtures of divisible but not P-divisible $\Lambda_\alpha(t)$ are always invertible.

There are also some interesting properties for mixtures of indivisible phase-damping channels. It was shown in \cite{CCMK} that such mixtures can produce CP-divisible maps. In this case, $\lambda(t)$ is monotonically decreasing, even though it passes through zero. Actually, $\dot{\lambda}(t)\leq 0$ is a necessary condition for a mixture of $\Lambda_\alpha(t)$ to be P-divisible. If we want to guarantee the divisibility of $\Lambda(t)$, then it turns out that there are easy-to-check necessary and sufficient conditions.

\begin{Proposition}
Mixing indivisible $\Lambda_\alpha(t)$ produces divisible $\Lambda(t)$ if and only if
\begin{equation}\label{cc}
\lambda(t_\ast)=-\frac{x_{\min}}{1-x_{\min}}\quad\implies\quad
\lambda(t)=-\frac{x_{\min}}{1-x_{\min}}
\end{equation}
for $t\geq t_\ast$ and $x_{\min}=\min_\alpha x_\alpha$.
\end{Proposition}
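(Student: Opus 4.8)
The plan is to read off divisibility of the mixture directly from point (1) of Theorem \ref{PC}, applied to the eigenvalues $\lambda_\alpha(t)=x_\alpha+(1-x_\alpha)\lambda(t)$ of $\Lambda(t)$. That criterion splits into two requirements: positivity, $\lambda_\alpha(t)\geq 0$ for every $\alpha$, and zero-persistence, $\lambda_\alpha(s)=0\Rightarrow\lambda_\alpha(t)=0$ for all $t\geq s$. First I would rephrase both requirements purely in terms of the single scalar $\lambda(t)$, and then argue that together they are equivalent to the single condition eq.~(\ref{cc}).

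For positivity, note that $\lambda_\alpha(t)\geq 0$ is the same as $\lambda(t)\geq -x_\alpha/(1-x_\alpha)$, and since $x\mapsto x/(1-x)$ is increasing on $[0,1)$ the most stringent of the three bounds is the one carried by the smallest weight, giving $\lambda(t)\geq -x_{\min}/(1-x_{\min})$. For zero-persistence, observe that $\lambda_\alpha(s)=0$ forces $\lambda(s)=-x_\alpha/(1-x_\alpha)$; under the positivity bound just derived, only the index realising $x_{\min}$ can ever attain this value, while any $\alpha$ with $x_\alpha>x_{\min}$ would require $\lambda(s)$ to lie strictly below the admissible minimum and is therefore excluded (a tie in $x_{\min}$ is handled identically, both indices sharing the same threshold $-x_{\min}/(1-x_{\min})$). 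Consequently the zero-persistence requirement collapses to exactly eq.~(\ref{cc}), and reading the same requirement off at the minimal index supplies the converse direction.

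The delicate point, and the step I expect to be the main obstacle, is that eq.~(\ref{cc}) superficially encodes only persistence, whereas divisibility also demands positivity, so I must show that eq.~(\ref{cc}) already implies $\lambda(t)\geq -x_{\min}/(1-x_{\min})$. Here I would invoke continuity of $\lambda(t)$ together with the normalisation $\lambda(0)=1$ inherited from $\Lambda(0)=\oper$: if $\lambda$ ever dropped below the threshold it would, by the intermediate value theorem, first have to equal $-x_{\min}/(1-x_{\min})$ at some earliest time $t_\ast$, at which point eq.~(\ref{cc}) pins it to that value for all later times and thereby forbids any subsequent descent. Thus eq.~(\ref{cc}) and continuity jointly enforce positivity, completing the ``if'' direction. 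Finally I would record the degenerate case $x_{\min}=0$, where the threshold is $0$ and eq.~(\ref{cc}) reduces to the very zero-persistence condition that indivisibility of $\Lambda_\alpha(t)$ violates, confirming that such mixtures stay indivisible in agreement with the statement.
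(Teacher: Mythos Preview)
Your proposal is correct and follows the same route the paper indicates: the paper's entire justification is the single sentence ``The above constraint is a direct consequence of point (1) in Theorem~\ref{PC},'' and your argument is precisely the unpacking of that implication for the eigenvalues $\lambda_\alpha(t)=x_\alpha+(1-x_\alpha)\lambda(t)$. Your observation that eq.~(\ref{cc}) by itself forces the positivity bound $\lambda(t)\geq -x_{\min}/(1-x_{\min})$ via continuity and the intermediate value theorem is a genuine detail the paper suppresses, so your write-up is in fact more complete than the original.
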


The above constraint is a direct consequence of point (1) in Theorem \ref{PC}. Whenever condition (\ref{cc}) is violated, the mixture of indivisible maps remains indivisible. Recall that all indivisible Pauli dynamical maps are noninvertible.

\section{Conclusions}

In this paper, we analyzed the divisibility of Pauli dynamical maps that are in general noninvertible. We provided the necessary and sufficient conditions for these maps to be divisible, P-divisible, and CP-divisible in terms of their eigenvalues alone. Next, we considered the qubit evolution given by legitimate mixtures of Pauli dynamical maps. We found the following interesting connections between the divisibility properties of the most general phase-damping qubit maps and their convex combinations.
\begin{itemize}
\item Indivisible maps follow only from mixing indivisible maps.
\item Mixtures of divisible maps always produce divisible maps.
\item Convex combinations of CP-divisible maps are at least P-divisible.
\item Noninvertible CP-divisible maps cannot be obtained through mixing.
\item Time-local generators of noninvertible dynamical maps have decoherence rates that become infinite for finite times.
\item Mixtures of CP-divisible maps are at least P-divisible.
\item Indivisible dynamical maps can be mixed into maps with any degree of divisibility.
\end{itemize}
We also provided examples of time-local generators for noninvertible dynamical maps, including a legitimate generator with all decoherence rates infinite at a finite time $t>0$.

There are still many open questions regarding the divisibility of dynamical maps. Due to the complexity of calculating the trace norms for high-dimensional systems, we were unable to provide a satisfactory analysis for the generalized Pauli channels. It would be interesting to analyze the divisibility of mixtures of more involved Pauli maps \cite{Jagadish4} and noninvertible generalized Pauli dynamical maps. One could even find relations between $k$-divisibility of channels \cite{k-pos} and their convex combinations by considering something more general than phase-damping maps.

\section{Acknowledgements}

This paper was supported by the Foundation for Polish Science (FNP) and the Polish National Science Centre project No. 2018/30/A/ST2/00837. The author thanks Dariusz Chru\'{s}ci\'{n}ski for helpful discussions.

\bibliography{C:/Users/cynda/OneDrive/Fizyka/bibliography}
\bibliographystyle{C:/Users/cynda/OneDrive/Fizyka/beztytulow2}

\appendix

\section{Proof to Theorem \ref{PC}}

Any Pauli dynamical map $\Lambda(t)$ is divisible if and only if $\lambda_\alpha(t)\geq 0$ and
\begin{equation}
\lambda_\alpha(s)=0\qquad\implies\qquad\lambda_\alpha(t\geq s)=0,
\end{equation}
which is a direct consequence of Theorem \ref{DIV}.
Assume the most general form of a divisible Pauli dynamical map -- namely, let the eigenvalues $\{\lambda_1(t),\lambda_2(t),\lambda_3(t)\}=\{\lambda_i(t),\lambda_j(t),\lambda_k(t)\}$ evolve in such a way that
\[
\begin{split}
0\leq t<t_1:&
\qquad(\lambda_i(t),\lambda_j(t),\lambda_k(t))=(\lambda(t),\eta(t),\mu(t)),\\
t_1\leq t<t_2:&
\qquad(\lambda_i(t),\lambda_j(t),\lambda_k(t))=(\lambda(t),\eta(t),0),\\
t_2\leq t<t_3:&
\qquad(\lambda_i(t),\lambda_j(t),\lambda_k(t))=(\lambda(t),0,0),\\
t\geq t_3:&
\qquad(\lambda_i(t),\lambda_j(t),\lambda_k(t))=(0,0,0),
\end{split}
\]
where $\lambda(t),\eta(t),\mu(t)>0$ are arbitrary functions for which $\Lambda(t)$ is completely positive.
In other words, $\Lambda(t)$ has three non-zero eigenvalues up to a time $t_1$. Then, at $t=t_1$, one of the eigenvalues vanishes. As the dynamical map is divisible, any vanishing eigenvalues will not reappear at later times. Next, the two remaining eigenvalues evolve up to $t=t_2\geq t_1$, when another eigenvalue vanishes. The final eigenvalue reaches zero at $t=t_3\geq t_2$. From then on, the system stays at the maximally mixed state. Note that we allow $t_j=t_k$ for $j,k=1,2,3$, and even $t_k\to\infty$, which would mean that some of the eigenvalues do not vanish for finite times.

Now, consider the problem of finding the P and CP-divisibility conditions for $\Lambda(t)$ separately for each range of time. If $\Lambda(t)$ satisfies the P or CP-divisibility conditions at all times $t\geq 0$, then it is P or CP-divisible, respectively. Without a loss of generality, we assume that $\Lambda(t)$ is a solution of the master equation $\dot{\Lambda}(t)=\mathcal{L}(t)\Lambda(t)$ with a time-local generator defined as in eq. (\ref{gen_PC}).

First, we observe that for $0\leq t<t_1$ the dynamical map is invertible, and hence the divisibility conditions coincide with the conditions for invertible maps. Recall that invertible Pauli dynamical maps are P-divisible if and only if 
\begin{equation}
\dot{\lambda}_\alpha(t)\leq 0
\end{equation}
and CP-divisible if and only if its time-local generator $\mathcal{L}(t)$ has all its decoherence rates $\gamma_\alpha(t)\geq 0$ \cite{Filip}. Note that the condition for the generator is equivalent to
\begin{equation}
2\frac{\der}{\der t}\ln[\lambda_\alpha(t)]\geq
\frac{\der}{\der t}\ln[\lambda_1(t)\lambda_2(t)\lambda_3(t)]
\end{equation}
due to eq. (\ref{eigenv_PC}) that connects $\lambda_\alpha(t)$ with $\gamma_\alpha(t)$.

Now, we still need to formulate the conditions for the times at which $\Lambda^{-1}(t)$ is not well defined. For this purpose, let us recall the divisibility properties for the Pauli channels $\Lambda$. From definition, any quantum channel $\Lambda$ is divisible if it can be decomposed into $\Lambda=V\Lambda^\prime$, where $\Lambda^\prime$ is also a quantum channel, and $V$ is a positive, trace-preserving, non-unitary map.

\begin{Lemma}\label{nec}
Noninvertible Pauli channels $\Lambda$ are P-divisible if and only if
\begin{equation}
\lambda_1\lambda_2\lambda_3\geq 0
\end{equation}
(Theorem 25 in \cite{Cirac} or Theorem 2 in \cite{Davalos})
and CP-divisible if and only if there is only one $k\in\{1,2,3\}$ such that
\begin{equation}
\lambda_k\neq 0
\end{equation}
(Theorem 24 in \cite{Cirac}).
\end{Lemma}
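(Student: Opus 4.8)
The plan is to reduce divisibility of the noninvertible channel to the behaviour of its diagonal Bloch representation $M_\Lambda=\mathrm{diag}(\lambda_1,\lambda_2,\lambda_3)$ and to impose positivity resp. complete positivity of the relevant propagator through the Fujiwara--Algoet conditions (\ref{Fuji-2}). Because every divisible Pauli map is image non-increasing (as shown above), I would work throughout with the propagator built from the generalized inverse $\Lambda^-$ (Ref. \cite{Ujan}): on each surviving direction it acts by $\lambda_\alpha(t)/\lambda_\alpha$, while on every direction with $\lambda_\alpha=0$ the generalized inverse forces the propagator eigenvalue to vanish. Pinning the propagator down in this way is what lets Theorem \ref{ini} apply and what excludes the spurious full-rank factorizations; I expect this to be the step that needs the most care.

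For the P-divisibility claim I would first observe that a noninvertible channel has a vanishing eigenvalue, so $\lambda_1\lambda_2\lambda_3=0\ge0$ holds automatically and the real content is that such channels are always P-divisible. This I would check directly: by divisibility $\lambda_\alpha\ge0$ and the $\lambda_\alpha(t)$ are non-increasing, so the propagator eigenvalues $v_\alpha\in\{\lambda_\alpha(t)/\lambda_\alpha,\,0\}$ all satisfy $0\le v_\alpha\le1$; a diagonal Bloch map with every $|v_\alpha|\le1$ carries the Bloch ball into itself and is therefore positive, which by Theorem \ref{ini} yields P-divisibility. The inequality $\lambda_1\lambda_2\lambda_3\ge0$ is merely the orientation condition $\det M_\Lambda\ge0$ obeyed by any positive propagator emanating from the identity, and is what the general statement records.

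For the CP-divisibility claim I would argue the two implications separately. If exactly one eigenvalue survives, say $(0,0,\lambda_k)$, I would exhibit the explicit split $\Lambda=\Lambda\circ\Phi$, where $\Phi$ is the full dephasing channel with eigenvalues $(0,0,1)$: both factors are completely positive and the left one is non-unitary, so $\Lambda$ is CP-divisible; equivalently, the forward propagator has surviving eigenvalue $\lambda_k(t)/\lambda_k\le1$ and passes Fujiwara--Algoet. For the converse, suppose two eigenvalues survive, $(\lambda_i,\lambda_j,0)$ with $\lambda_i,\lambda_j>0$. The canonical propagator then has eigenvalues $(v_i,v_j,0)$ with $v_i,v_j\to1^-$ as the singular instant is approached from above, and specializing (\ref{Fuji-2}) to a Pauli map with a zero eigenvalue (whence $\min_\beta=0$) gives the necessary condition $v_i+v_j\le1$. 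This fails in the limit $v_i,v_j\to1$, so no completely positive propagator exists and $\Lambda$ cannot be CP-divisible.

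The hard part will be the rigorous justification that the propagator is forced to invert the surviving directions and to annihilate the killed ones, that is, excluding the full-rank completely positive factorizations that would otherwise let even a two-nonzero channel split; this is exactly where the image non-increasing property and the structure of the generalized inverse $\Lambda^-$ must be used, after which both parts collapse to the elementary Fujiwara--Algoet bookkeeping above. The general channel-theoretic formulations, valid beyond the noninvertible Pauli setting, are those recorded in Refs. \cite{Cirac,Davalos}.
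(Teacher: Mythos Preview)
The paper does not prove Lemma~\ref{nec} at all: it is quoted verbatim from the channel-divisibility literature (Wolf--Cirac, Davalos et al.) as a structural fact about \emph{static} Pauli channels. In that setting there is no time parameter, no propagator $V(t,s)$, no generalized inverse $\Lambda^-$, and Theorem~\ref{ini} is irrelevant; the statement concerns whether a fixed channel $\Lambda$ admits a factorisation $\Lambda=V\Lambda'$ with $\Lambda'$ a channel and $V$ a non-unitary positive (resp.\ completely positive) trace-preserving map. Your entire proposal is phrased in the dynamical-map language---time-dependent eigenvalues $\lambda_\alpha(t)$, ``approaching the singular instant from above'', the image non-increasing property---so what you are actually sketching is not a proof of the Lemma but a direct derivation of the consequences the paper draws \emph{from} it in the proof of Theorem~\ref{PC}. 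That is a legitimate alternative route (bypass the cited channel result and argue directly with propagators), but it should be presented as such.

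Even read in that spirit, the CP converse has a genuine gap. You take both $s$ and $t$ above the singular time $t_1$ and look at the ``canonical'' propagator with third eigenvalue $0$. But for $t_1\le s\le t$ the direction $\sigma_k$ lies in $\ker\Lambda(s)$, so $V(t,s)[\sigma_k]$ is \emph{not} determined by $\Lambda(t)=V(t,s)\Lambda(s)$; any Pauli map with eigenvalues $(v_i,v_j,c)$ is admissible, and the image non-increasing property does not force $c=0$. With $v_i,v_j\in(0,1]$ one checks that $c=v_i+v_j-1$ always yields a completely positive $V(t,s)$, so your Fujiwara--Algoet obstruction $v_i+v_j\le 1$ disappears once the freedom in $c$ is used. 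The actual obstruction lives at times $s<t_1\le t$: there $\lambda_k(s)>0$, the propagator is uniquely determined with third eigenvalue $0$, and letting $s\uparrow t_1$, $t\downarrow t_1$ gives $v_i,v_j\to 1$, violating $v_i+v_j\le 1$. This is the regime you need to analyse; your ``from above'' limit is the wrong one. (Relatedly, the appeal to ``by divisibility $\lambda_\alpha\ge0$'' in the P-part imports a dynamical-map hypothesis that is not available in the channel statement; noninvertible Pauli \emph{channels} can have negative eigenvalues.)
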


This lemma provides us with necessary conditions for P and CP-divisibility of Pauli dynamical maps $\Lambda(t)$. Indeed, if $\Lambda(t)$ is P-divisible, then $\Lambda(t)=V(t,s)\Lambda(s)$ for $s\leq t$ with a positive propagator $V(t,s)$. For fixed $t$ and $s$, $\Lambda(t)$ and $\Lambda(s)$ are two Pauli channels -- denote them by $\Lambda^\prime$ and $\Lambda$, respectively. Analogically, for fixed times, $V(t,s)=V$ is a positive map. Therefore, one has $\Lambda^\prime=V\Lambda$, which is exactly the P-divisibility property for the Pauli channel $\Lambda^\prime$. Analogical reasoning follows for CP-divisible dynamical maps.

\begin{Corollary}
If a Pauli dynamical map $\Lambda(t)$ is CP-divisible, then for $t_1\leq t<t_3$ its eigenvalues are
\begin{equation}
(\lambda_i(t),\lambda_j(t),\lambda_k(t))=(\lambda(t),0,0).
\end{equation}
\end{Corollary}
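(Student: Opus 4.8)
The plan is to obtain the statement as a pointwise consequence of the necessary CP-divisibility condition for noninvertible Pauli channels stated in Lemma~\ref{nec}. The discussion preceding the corollary already shows that if the dynamical map $\Lambda(t)$ is CP-divisible, then for every fixed pair $s\le t$ the decomposition $\Lambda(t)=V(t,s)\Lambda(s)$ with completely positive $V(t,s)$ exhibits the Pauli channel $\Lambda(t)$ as CP-divisible in the channel sense of Lemma~\ref{nec}. The point I would emphasize is that this requirement must hold at \emph{every} time $t$ simultaneously, so it constrains the admissible eigenvalue patterns along the entire trajectory rather than at isolated instants.

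First I would observe that for all $t\ge t_1$ the map $\Lambda(t)$ is noninvertible, since at least one of its eigenvalues has already vanished; hence Lemma~\ref{nec} applies and CP-divisibility forces exactly one of $\lambda_1(t),\lambda_2(t),\lambda_3(t)$ to be non-zero at each such $t$. I would then confront the intermediate regime $t_1\le t<t_2$, where the general parametrization gives eigenvalues $(\lambda(t),\eta(t),0)$ with $\lambda(t),\eta(t)>0$. This configuration has two non-zero eigenvalues and directly contradicts the single-non-zero-eigenvalue requirement just established. The only resolution is that no such interval exists, i.e.\ $t_1=t_2$, so the second and third eigenvalues vanish simultaneously. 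The noninvertible regime then reduces to $t_1\le t<t_3$ with $(\lambda_i(t),\lambda_j(t),\lambda_k(t))=(\lambda(t),0,0)$, which is the claim.

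I expect the only delicate point to be the uniform-in-time transfer of the channel-level criterion to the dynamical map -- namely justifying that the necessary condition of Lemma~\ref{nec} must be imposed at each $t$ separately. Once that is granted, the argument is purely combinatorial on the eigenvalue multiplicities and requires no computation: no trace-norm estimates or explicit propagators enter, since the contradiction follows immediately from the forbidden ``two non-zero eigenvalues'' configuration.
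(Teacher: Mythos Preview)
Your proposal is correct and follows essentially the same route as the paper: apply Lemma~\ref{nec} pointwise at each $t\ge t_1$ (where $\Lambda(t)$ is noninvertible), observe that the two-non-zero-eigenvalue configuration on $[t_1,t_2)$ is thereby forbidden, and collapse that interval. Your conclusion $t_1=t_2$ is the correct reading; the paper's remark ``$t_2=t_3$'' immediately after the corollary appears to be a typo, as it is inconsistent with the corollary statement itself and with point~(3) of Theorem~\ref{PC}.
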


In other words, for CP-divisible $\Lambda(t)$, one has $t_2=t_3$. Now, to see that $t_3$ can in general be finite, recall that for $t_1\leq t<t_3$ the corresponding propagator $V(t,s)$ defined by
\begin{equation}
\begin{split}
V(t,s)[\mathbb{I}]&=\mathbb{I},\\
V(t,s)[\sigma_i]&=\frac{\lambda(t)}{\lambda(s)}\sigma_i,\\
V(t,s)[\sigma_j]&=V(t,s)[\sigma_k]=0
\end{split}
\end{equation}
is completely positive and trace-preserving for any monotonically decreasing $\lambda(t)>0$ \cite{Ujan}. This way, we prove the second part of Theorem \ref{PC}.

To prove the first part, we consider the remaining time range, $t_1\leq t<t_2$, when there are two non-zero eigenvalues and $\Lambda(t)$ can be at most P-divisible. The associated propagator $V(t,s)$ satisfies the eigenvalue equations
\begin{equation}
\begin{split}
V(t,s)[\mathbb{I}]&=\mathbb{I},\\
V(t,s)[\sigma_i]&=\frac{\lambda(t)}{\lambda(s)}\sigma_i,\\
V(t,s)[\sigma_j]&=\frac{\eta(t)}{\eta(s)}\sigma_j,\qquad V(t,s)[\sigma_k]=0.
\end{split}
\end{equation}
Now, $V(t,s)$ is a Pauli map, so it is positive if and only if \cite{Zyczkowski}
\begin{equation}
\left|\frac{\lambda(t)}{\lambda(s)}\right|\leq 1,\qquad \left|\frac{\eta(t)}{\eta(s)}\right|\leq 1.
\end{equation}
As the functions $\lambda(t)$ and $\eta(t)$ are positive, the above condition is equivalent to a monotonical decrease of $\lambda(t)$ and $\eta(t)$. Therefore, for P-divisible $\Lambda(t)$, one has $\dot{\lambda}_\alpha(t)\leq 0$ at all times $t$.

\end{document}